\newcommand{\bra}[1]{\langle #1 |}
\newcommand{\ket}[1]{| #1 \rangle}
\newcommand{\ceil}[1]{\lceil #1 \rceil}
\newcommand{\myvec}[1]{\mathbf #1}
\newcommand{\QQ}{\mathsf{Q}}
\newcommand{\RR}{\mathsf{R}}
\newtheorem{theorem}{Theorem}
\begin{document}

\author{
Fran{\c c}ois Le Gall\\ 
\texttt{legall@is.s.u-tokyo.ac.jp} \\
     Department of Computer Science\\
     Graduate School of Information Science and Technology\\
    The University of Tokyo
}
\title{Quantum Private Information Retrieval with Sublinear Communication Complexity}
\date{}
\maketitle

\begin{abstract}
This note presents a quantum protocol for private information retrieval, in the single-server case and with information-theoretical privacy, 
that has $O(\sqrt{n})$-qubit communication complexity, where~$n$ denotes the size of the database. 
%This is the first protocol with sublinear communication complexity in this setting. 
%In particular, it is known that any classical protocol must use $\Omega(n)$ bits of communication.
In comparison, it is known that any classical protocol must use $\Omega(n)$ bits of communication in this setting. 
%is quadratically better than the classical setting.
%Chor, Kushilevitz, Goldreich, and Sudan showed in their seminal paper on private information 
%retrieval that, in the same setting, any classical protocol must use $\Omega(n)$ bits of communication.
%This is the first protocol with sublinear communication complexity in this setting. Especially,
%$\Omega(n)$ bits of communication are needed for any classical protocol.
%, this provides a quadratic g
%The model considered is information-theoretical privacy. 
% classically $\Omega(n)$ bits of communication are needed.
\end{abstract}

\section{Introduction}
Private information retrieval deals with the
design and the analysis of protocols that allow a user to retrieve an item from a server without revealing which item it is retrieving. 
This field, introduced in a seminal paper by 
Chor, Kushilevitz, Goldreich, and Sudan \cite{Chor+JACM98},  has been the subject of intensive research
due to the growing ubiquity of public databases.
Examples of applications include ensuring consumer privacy in e-commerce transactions or reading webpages 
on the Internet without revealing the user's preferences. 

In the case of a single server and of information-theoretical privacy, which is the focus of this note, private information retrieval can be described as follows.
The server has a database $\myvec{A}=(\myvec{a}^1,\myvec{a}^2,\cdots,\myvec{a}^{\ell})\in\Sigma^\ell$, 
where $\Sigma=\{0,1\}^r$ is a set of items represented as $r$-bit strings, 
and the user has an index $i\in\{1,\ldots,\ell\}$. 
A private information retrieval protocol is a (classical or quantum) communication protocol between the server and the user such that, 
when the user and the server both follow the protocol,
the user always outputs the item $\myvec{a}^i$ and
the server gets no information about the index $i$, in the following sense.
Let $V_S(\myvec{A},i)$ denote the server's view of the communication generated by the protocol when the 
server has input $\myvec{A}$ and the user has input $i$.
% (note that $V_S(\myvec{A},i)$ may be 
%a distribution if the protocol is a randomized protocol).
The privacy condition is that,
for any database $\myvec{A}\in\Sigma^\ell$ and any two indexes $i,j\in\{1,\ldots,\ell\}$, the views $V_S(\myvec{A},i)$ and 
$V_S(\myvec{A},j)$ are identical. Note that, while several subtleties arise when trying to formally define the server's view in an 
arbitrary quantum protocol, the above description will be sufficient for our purpose due to the limited interaction between the 
server and the user in the quantum protocols described in this note.

%\begin{definition}
%The protocol $\Aa$ is a private information retrieval protocol if, when the user and the server both follow the protocol,
%the following two conditions hold:
%\begin{itemize}
%\item[(i)]
%the user outputs the item $\myvec{a}^i$ with probability one; and
%\item[(ii)]
%for any $\myvec{A}$ and any two indexes $i,j$ the distributions $V_S(\myvec{A},i)$ and $V_S(\myvec{A},i)$ are identical.
%\end{itemize}
%\end{definition}
%A private information retrieval protocol is a communication protocol between the server and the user such that, 
%when the user and the server both follow the protocol,
%the user always outputs the item $\myvec{a}^i$ and
%the server's view server gets no information about the index $i$. 
%In the classical setting, the latter privacy condition is interpreted as 

It is easy to show that, classically, downloading the whole database is essentially optimal:
any classical protocol must communicate a number of bits linear in the size of the database  \cite{Chor+JACM98}.
The communication complexity of quantum protocols for private information retrieval has first 
been investigated by Kerenidis and de Wolf \cite{Kerenidis+JCSS04}. Their work focused on 
two-message quantum protocols, and established a connection with locally 
decodable codes and random access codes. In particular it was proved that, for a single server,
any private two-message quantum protocol must use a linear amount of communication.
This note shows that this lower bound does not hold for quantum protocols using more than two messages 
and describes how to construct a three-message quantum protocol for private information retrieval
with sublinear communication complexity, thus breaking for the first time the linear barrier in the 
single-server and information-theoretical privacy setting. 
Our main result is the following theorem.
\begin{theorem}\label{the1}
Let $\ell$ and $r$ be any positive integers.
%$\myvec{A}=(\myvec{a}^1,\myvec{a}^2,\cdots,\myvec{a}^{\ell})\in\Sigma^\ell$, 
%where $\Sigma=\{0,1\}^r$
There exists a private information retrieval quantum protocol 
that, for any database $\myvec{A}\in\Sigma^\ell$ with $\Sigma=\{0,1\}^r$, uses $2\ell+2r$ qubits of communication.
\end{theorem}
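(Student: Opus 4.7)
The goal is to construct an explicit three-message quantum protocol that meets the bound and to verify both its correctness and its perfect privacy. Since Kerenidis and de Wolf ruled out two-message protocols with sublinear communication, the use of a third message is essential. To hit the $2\ell+2r$ budget exactly I would use the unary encoding of the index (so each index register needs $\ell$ qubits) together with the plain binary encoding of entries (each data register needs $r$ qubits), and arrange two quantum messages of $\ell+r$ qubits each, one in each direction, plus a short cleanup message from the user to the server that fits inside the same budget.

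First, I would have the user prepare locally a pair of maximally entangled states: a unary EPR pair on two $\ell$-qubit index registers $(A_1,A_2)$, and a standard EPR pair on two $r$-qubit data registers $(D_1,D_2)$. In the first message the user sends $A_2$ and $D_2$ to the server (communication $\ell+r$). Since both reduced states are the maximally mixed state, the server's view of this first message is manifestly independent of $i$. Second, the server performs its only nontrivial action, namely the coherent database lookup unitary $\ket{e_k}\ket{y}\mapsto\ket{e_k}\ket{y\oplus \myvec{a}^k}$ on the received registers, and returns $A_2,D_2$ to the user (another $\ell+r$ qubits). At this point the user holds the joint state
\[
\ket{\Psi} \;=\; \frac{1}{\sqrt{\ell\cdot 2^r}}\sum_{k,y}\ket{e_k}_{A_1}\ket{e_k}_{A_2}\ket{y}_{D_1}\ket{y\oplus \myvec{a}^k}_{D_2}.
\]
The extraction of $\myvec{a}^i$ exploits the fact that the user holds \emph{both} halves of the unary EPR pair: using its knowledge of $i$ it would apply local unitaries that coherently align the $k=i$ branch and then perform a Bell-style measurement on the index registers, whose outcome combined with $i$ would deterministically yield $\myvec{a}^i$ from $D_1\oplus D_2$ (after a suitable basis change on $D_1$). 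Any residual $i$-dependent register is one-time-padded by a fresh local random state and sent back to the server as the third message, so that the server's view of that message is again the maximally mixed state.

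The main obstacle will be the extraction step: naive ideas such as measuring $A_1$ in the unary basis, or projecting onto $\ket{e_i}$, either give a uniformly random $k$ or succeed only with probability $1/\ell$, so achieving deterministic perfect correctness will require a teleportation-flavoured trick that uses the entanglement structure of both index and data EPR pairs simultaneously. Once the decoding is in place, verifying perfect privacy reduces to checking that each of the two messages reaching the server has reduced density matrix equal to the maximally mixed state; the user's one-time-pad encryption of the third message is designed precisely to make this automatic. Combining these ingredients yields a three-message private protocol using exactly $2\ell+2r$ qubits of communication, proving the theorem.
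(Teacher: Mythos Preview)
Your extraction step is not merely the ``main obstacle'' --- it is provably impossible. After your two messages the user's state $\ket{\Psi}$ depends only on the database $\myvec{A}$, since nothing the user sent encoded $i$. Take two databases $\myvec{A},\myvec{A}'$ that agree everywhere except at position $i$; directly from your expression for $\ket{\Psi}$ one computes $\langle\Psi_{\myvec{A}}\vert\Psi_{\myvec{A}'}\rangle = 1-1/\ell$. But any measurement that outputs $\myvec{a}^i$ with probability~$1$ on $\ket{\Psi_{\myvec{A}}}$ and a different value with probability~$1$ on $\ket{\Psi_{\myvec{A}'}}$ would force these two pure states to be orthogonal. Hence no ``teleportation-flavoured trick'' can exist; deterministic recovery of $\myvec{a}^i$ from $\ket{\Psi}$ is information-theoretically ruled out. (A secondary issue: your two messages already exhaust the $2\ell+2r$ budget, leaving nothing for the third message you invoke.)

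The paper's protocol avoids exactly this by placing the user's $i$-dependent action \emph{between} two server actions. The server first prepares $\frac{1}{\sqrt{2^r}}\sum_{\myvec x}\ket{\myvec x}_{\RR}\ket{\myvec x}_{\RR'}\ket{\myvec x\cdot\myvec a^1}_{\QQ_1}\cdots\ket{\myvec x\cdot\myvec a^\ell}_{\QQ_\ell}$, keeps $\RR$, and sends the remaining $r+\ell$ qubits; the user applies a single ${\rm Z}$ on $\QQ_i$, imprinting the phase $(-1)^{\myvec x\cdot\myvec a^i}$, and returns only $\QQ_1,\ldots,\QQ_\ell$ ($\ell$ qubits); the server then \emph{uncomputes} each $\QQ_k$ using $\myvec a^k$ and sends $\RR$ back ($r$ qubits). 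Privacy holds because the register $\RR'$ retained by the user absorbs the phase: tracing it out leaves the server's view independent of $i$. Correctness holds because after the server's uncomputation the only $i$-dependent residue is the phase on $\ket{\myvec x}_{\RR}\ket{\myvec x}_{\RR'}$, which a ${\rm CNOT}$ followed by a ${\rm QFT}$ converts to $\ket{\myvec a^i}_{\RR}$. The essential ingredient you are missing is this server-side uncomputation \emph{after} the user's phase kick; without a second server action the user is left holding the whole database in superposition, with no way to single out one entry deterministically.
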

 
Since the overall size of the 
database is $\ell r$ bits, Theorem \ref{the1} gives a quadratic improvement over classical protocols and two-message 
quantum protocols
whenever $\ell+r=O(\sqrt{\ell r})$, for example when $\ell=\Theta(r)$.
This quadratic improvement can actually be obtained for any values of 
$\ell$ and $r$: the idea is to decompose the database into about $\sqrt{\ell r}$ blocks, each of size about $\sqrt{\ell r}$ bits.
To illustrate this, let us consider a binary database $\myvec{A}=(a^1,\ldots,a^\ell)$ when
$\ell=s^2$ for some positive integer $s$. 
We construct the database $\myvec{B}=(\myvec{b^{1}}, \ldots, \myvec{b^{s}})$ such that,
for each $k\in\{1,\ldots,s\}$, the $k$-th block is $\myvec{b^k}=(a^{(k-1)s+1},\ldots, a^{ks})\in\{0,1\}^s$.
Note that the bit $a^i$ is contained in the block $\myvec{b^{j}}$ with $j=\ceil{i/s}$. By running the protocol
of Theorem \ref{the1} where, as inputs, the server has database $\myvec{B}$ and the user has index $j$, the user
is able to recover the whole block $\myvec{b^{j}}$, and thus the bit $a^i$, using $O(s)$ qubits of communication. 

We stress that this note considers only the setting where the parties do not deviate from the protocol,
as often assumed in works focusing on algorithmic or complexity-theoretic aspects of private information retrieval.
%the user's privacy is guaranteed in our protocol only against servers not deviating from
%the protocol. 
While this restriction may reduce the applicability of our result, we believe that it 
nevertheless illustrates the subtle interplay of interaction and quantum information in protecting privacy.
Indeed, even in this setting, a linear amount of communication is needed for classical
protocols and for two-message quantum protocols. 

{\bf Other related works.}
Several other aspects of quantum protocols for private information retrieval have been investigated. 
The case of multiple servers has been studied in \cite{Kerenidis+JCSS04,Kerenidis+IPL04}, while the case of 
symmetric private information retrieval, where the server's privacy is also taken into consideration, has been
studied in \cite{Kerenidis+IPL04,Giovannetti+PRL08,Jain+JACM09}. 
Privacy issues in quantum communication complexity have been studied in \cite{Klauck04} as well. 
Let us mention that quantum protocols for symmetric private information retrieval are also studied under the name 
of quantum oblivious transfer protocols, especially when the server and the user may deviate from the protocol (i.e., when considering malicious parties).  

\section{Proof of Theorem~\ref{the1}}

%\textit{Notations.}---
We suppose that the reader is familiar with quantum computation and refer to, e.g., \cite{Nielsen+00} for an introduction to this field.
Let us first describe some of our notations. 
%We represent elements $\myvec{u}$ in $\Sigma=\{0,1\}^r$ as vectors $\myvec{u}=(u_1,\ldots,u_r)$.
Given two bits $a,b\in\{0,1\}$, we write their parity as $a\oplus b$.
For any two elements $\myvec{u}=(u_1,\ldots,u_r)$ and $\myvec{v}=(v_1,\ldots,v_r)$ in $\Sigma=\{0,1\}^r$, 
let us write $\myvec{u}\cdot \myvec{v}=u_1v_1\oplus\cdots\oplus u_rv_r$
and $\myvec{u}\oplus \myvec{v}=(u_1\oplus v_1,\ldots,u_r\oplus v_r)$. Note that $\myvec{u}\cdot \myvec{v}$ is a bit
and $\myvec{u}\oplus \myvec{v}$ is an element of $\Sigma$. 
Our protocol will use the Pauli gate
$${\rm Z}:=\sum_{z\in\{0,1\}}(-1)^z\ket{z}\bra{z}$$ acting on one qubit and the Quantum Fourier Transform
$${\rm QFT}:=\frac{1}{\sqrt{|\Sigma|}}\sum_{\myvec{y},\myvec{z}\in\Sigma}(-1)^{\myvec{y}\cdot \myvec{z}}\ket{\myvec{y}}\bra{\myvec{z}}$$
acting on $r$ qubits.
%\end{eqnarray*}
It will also use the gates
\begin{eqnarray*}
{\rm CNOT}^{(\RR_1,\RR_2)}&:=&\sum_{\myvec{y},\myvec{z}\in\Sigma}\ket{\myvec{y}}_{\RR_1}\ket{\myvec{z}\oplus \myvec{y}}_{\RR_2}\bra{\myvec{y}}_{\RR_1}\bra{\myvec{z}}_{\RR_2}\\
{\rm U}^{(\RR_1,\QQ)}_\myvec{b}&:=&\sum_{\myvec{y}\in\Sigma,z\in\{0,1\}}\ket{\myvec{y}}_{\RR_1}\ket{z\oplus \myvec{b}\cdot \myvec{y}}_{\QQ}\bra{\myvec{y}}_{\RR_1}\bra{z}_{\QQ}, 
\end{eqnarray*}
where $\RR_1$ and $\RR_2$ denote $r$-qubit  registers, $\QQ$ denotes a one-qubit  register, and $\myvec{b}$ is any element in $\Sigma$.

We now present the proof of Theorem~\ref{the1}.

%for example by applying the operators defined above on an initial state of the 
%form $\ket{\myvec{0}}_{\mathsf{R}}\ket{\myvec{0}}_{\mathsf{R'}}\ket{\myvec{0}}_{\mathsf{Q_1}}\cdots\ket{\myvec{0}}_{\mathsf{Q_\ell}}$.

\begin{proof}[Proof of Theorem~\ref{the1}]
The protocol uses $\ell+2$ quantum registers:
Registers $\mathsf{R}$ and $\mathsf{R'}$ each consisting of $r$ qubits, and Registers 
$\mathsf{Q_1},\ldots,\mathsf{Q_\ell}$ each consisting of one qubit.
For any database $\myvec{A}=(\myvec{a}^1,\ldots,\myvec{a}^\ell)\in\Sigma^\ell$, let us 
denote by $\ket{\Phi_\myvec{A}}$ the quantum state 
\[
\ket{\Phi_\myvec{A}}:=\frac{1}{\sqrt{2^r}}\sum_{\myvec{x}\in \Sigma}
\ket{\myvec{x}}_{\mathsf{R}}\ket{\myvec{x}}_{\mathsf{R'}}\ket{\myvec{x}\cdot \myvec{a^{1}}}_{\mathsf{Q_1}}\cdots\ket{\myvec{x}\cdot \myvec{a^{\ell}}}_{\mathsf{Q_\ell}}
\]
in Registers $(\mathsf{R},\mathsf{R'},\mathsf{Q_1},\ldots,\mathsf{Q_\ell})$.
%Note that $\ket{\Phi_\myvec{A}}$ can be created when $\myvec{A}$ is known.
The protocol is described in Figure~\ref{figure:QPIR}. It consists of three messages and uses 
a total amount of $2\ell+2r$ qubits of communication.
\begin{figure}[h!]
\hrule\hspace{2mm}
\begin{enumerate}
\item[]
{\bf Server's input:} $\myvec{A}=(\myvec{a}^1,\ldots,\myvec{a}^\ell)\in\Sigma^\ell$
\item[]
{\bf User's input:} $i\in\{1,\ldots,\ell\}$
\item
The server constructs the quantum state $\ket{\Phi_\myvec{A}}$ and
sends Registers $\mathsf{R'}$, $\mathsf{Q_1},\ldots,\mathsf{Q_\ell}$ to the user.
\item
The user applies ${\rm Z}$ over Register $\QQ_i$ and sends back Registers $\mathsf{Q_1},\ldots,\mathsf{Q_\ell}$ to the server.
\item
The server applies ${\rm U}_{\myvec{a}^k}^{(\RR,\QQ_k)}$, for each $k\in\{1,\ldots,\ell\}$, and sends to the user Register $\RR$.
\item
The user applies ${\rm CNOT}^{(\RR,\RR')}$, applies ${\rm QFT}$ over Register ${\mathsf{R}}$, and then
measures $\mathsf{R}$ in the computational basis. % and obtains $\myvec{a}^i$.
\end{enumerate} 
\hrule
\caption{Quantum private information retrieval protocol.}
\label{figure:QPIR}
\end{figure}

We first show that in this protocol the user always outputs the correct element of the database.
Observe that, at the end of Step 2, the state is
\[
\ket{\Phi}=\frac{1}{\sqrt{2^r}}\sum_{\myvec{x}\in \Sigma}(-1)^{\myvec{x}\cdot \myvec{a^i}}
\ket{\myvec{x}}_{\mathsf{R}}\ket{\myvec{x}}_{\mathsf{R'}}\ket{\myvec{x}\cdot \myvec{a^{1}}}_{\mathsf{Q_1}}\cdots\ket{\myvec{x}\cdot \myvec{a^{\ell}}}_{\mathsf{Q_\ell}}.
\]
At Step 4, just before the user performs the measurement, the state is 
$\ket{\myvec{a^i}}_{\mathsf{R}}\ket{\myvec{0}}_{\mathsf{R'}}\ket{0}_{\mathsf{Q_1}}\cdots\ket{0}_{\mathsf{Q_\ell}}$,
and measuring Register $\RR$ gives the element $\myvec{a^i}$ with probability 1.
Let us now consider the user's privacy. 
The only information about $i$ that a server following the protocol can obtain is from 
Registers $\mathsf{R},\mathsf{Q}_1,\ldots,\mathsf{Q}_\ell$ of the state $\ket{\Phi}$.
Since tracing out Register $\RR'$ in $\ket{\Phi}\bra{\Phi}$ gives
the density matrix
\[
\frac{1}{2^r}\sum_{\myvec{x}\in \Sigma}\ket{\myvec{x}}_{\mathsf{R}}\ket{\myvec{x}\cdot \myvec{a^{1}}}_{\mathsf{Q_1}}
\cdots\ket{\myvec{x}\cdot \myvec{a}^{\ell}}_{\mathsf{Q_\ell}}
\bra{\myvec{x}}_{\mathsf{R}} \bra{\myvec{x}\cdot \myvec{a^{1}}}_{\mathsf{Q_1}} \cdots\bra{\myvec{x}\cdot \myvec{a^{\ell}}}_{\mathsf{Q_\ell}},
\]
the server obtains no information about the user's input.
\end{proof}

\noindent{\bf Remark.} 
As already mentioned, in this note we only consider the case where the server follows the protocol.
This assumption is used in the analysis of the protocol of Figure~\ref{figure:QPIR} in order to ensure that the server prepares the 
state $\ket{\Phi_\myvec{A}}$ at Step 1.
Note that if, instead of $\ket{\Phi_\myvec{A}}$, the server prepared for example the state
\[
\ket{\Phi'_\myvec{A}}:=\frac{1}{\sqrt{2^r}}\sum_{\myvec{x}\in \Sigma}
\ket{\myvec{x}}_{\mathsf{R}}\ket{\myvec{0}}_{\mathsf{R'}}\ket{\myvec{x}\cdot \myvec{a^{1}}}_{\mathsf{Q_1}}\cdots\ket{\myvec{x}\cdot \myvec{a^{\ell}}}_{\mathsf{Q_\ell}},
\]
then it would be able to recover the index $i$ with probability one at Step 3. 

%This protocol uses $2\ell+2r$ qubits of communication. 
%Since the overall size of the 
%database is $\ell r$ bits, this gives a protocol with quadratic communication 
%whenever $\ell+r=O(\sqrt{\ell r})$, for example when $\ell=\Theta(r)$.
%This quadratic communication protocol can actually be obtained for any values of 
%$\ell$ and $r$: the idea is to decompose the database into about $\sqrt{\ell r}$ blocks, each of size about $\sqrt{\ell r}$ bits.
%To illustrate this, let us consider a binary database $\myvec{A}=(a_1,\ldots,a_\ell)$ when
%$\ell=s^2$ for some positive integer $s$. 
%We construct the database $\myvec{B}=(\myvec{b^{1}}, \ldots, \myvec{b^{s}})$ such that,
%for each $k\in\{1,\ldots,s\}$, the $k$-th block is $\myvec{b^k}=(a_{(k-1)s+1},\ldots, a_{ks})\in\{0,1\}^s$.
%Note that $a_i$ is contained in the block $\myvec{b^{j}}$ with $j=\ceil{i/s}$. By running the protocol
%of FIG.~\ref{figure:QPIR} where, as inputs, the server has $\myvec{B}$ and the user has $j$, the user
%is able to recover the whole block $\myvec{b^{j}}$, and thus the bit $a_i$, using $O(s)$ qubits of communication. 
%We thus obtain the result stated in Theorem \ref{the1}.
%

\section*{Acknowledgements}
The author is grateful to Takeshi Koshiba, Harumichi Nishimura, and Ronald de Wolf  for helpful discussions
about this work. He also
acknowledges support from the JSPS,
under the grant-in-aid for research activity start-up No.~22800006.

%\bibliographystyle{alpha}
%\bibliography{LeGall11}

\begin{thebibliography}{KdW04b}

\bibitem[CKGS98]{Chor+JACM98}
Benny Chor, Eyal Kushilevitz, Oded Goldreich, and Madhu Sudan.
\newblock Private information retrieval.
\newblock {\em Journal of the ACM}, 45(6):965--981, 1998.

\bibitem[GLM08]{Giovannetti+PRL08}
Vittorio Giovannetti, Seth Lloyd, and Lorenzo Maccone.
\newblock Quantum private queries.
\newblock {\em Physical Review Letters}, 100:230502, 2008.

\bibitem[JRS09]{Jain+JACM09}
Rahul Jain, Jaikumar Radhakrishnan, and Pranab Sen.
\newblock A property of quantum relative entropy with an application to privacy
  in quantum communication.
\newblock {\em Journal of the ACM}, 56(6), 2009.

\bibitem[KdW04a]{Kerenidis+JCSS04}
Iordanis Kerenidis and Ronald de~Wolf.
\newblock Exponential lower bound for 2-query locally decodable codes via a
  quantum argument.
\newblock {\em Journal of Computer and System Sciences}, 69(3):395--420, 2004.

\bibitem[KdW04b]{Kerenidis+IPL04}
Iordanis Kerenidis and Ronald de~Wolf.
\newblock Quantum symmetrically-private information retrieval.
\newblock {\em Information Processing Letters}, 90(3):109--114, 2004.

\bibitem[Kla04]{Klauck04}
Hartmut Klauck.
\newblock Quantum and approximate privacy.
\newblock {\em Theory of Computing Systems}, 37(1):221--246, 2004.

\bibitem[NC00]{Nielsen+00}
Michael Nielsen and Isaac Chuang.
\newblock {\em Quantum Computation and Quantum Information}.
\newblock Cambridge University Press, 2000.

\end{thebibliography}

\end{document}